\newtheorem{remark}{Remark}{\normalfont}{\normalfont}
\newtheorem{theorem}{Theorem}
\newtheorem{assumption}{Assumption}
\newtheorem{proposition}{Proposition}
\newtheorem{lemma}{Lemma}
\renewcommand{\theenumi}{\arabic{enumi}}
\title{\LARGE \bf Model predictive control with dynamic move blocking}
\author{Valentina Breschi, Simone Formentin and Alberto Leva
	\thanks{This project was partially supported by the Italian Ministry of University and Research under the PRIN'17 project \textquotedblleft Data-driven learning of constrained control systems \textquotedblright, contract no. 2017J89ARP.}
\thanks{V. Breschi is with Department of Electrical Engineering, Eindhoven University of Technology, Eindhoven, Netherlands. S. Formentin and A. Leva are with Dipartimento di Elettronica, Informatica e Bioingegneria (DEIB), Politecnico di Milano, Via G. Ponzio 34/5, 20133 Milano, Italy. Corresponding author: {\tt\small simone.formentin@polimi.it}}
}
\begin{document}
	
\maketitle
\thispagestyle{empty}
\pagestyle{empty}

\begin{abstract}
Model Predictive Control (MPC) has proven to be a powerful tool for the control of systems with constraints. Nonetheless, in many applications, a major challenge arises, that is finding the optimal solution within a single sampling instant to apply a receding-horizon policy. In such cases, many suboptimal solutions have been proposed, among which the possibility of ``blocking'' some moves a-priori. In this paper, we propose a \textit{dynamic} approach to move blocking, to exploit the solution already available at the previous iteration, and we show not only that such an approach preserves asymptotic stability, but also that the decrease of performance with respect to the ideal solution can be theoretically bounded.
\end{abstract}
\begin{keywords}
	model predictive control, move blocking, computational load
\end{keywords}

\section{Introduction}
\label{sec:Introduction}

Model Predictive Control (MPC) is fundamental to handle constrained control problems providing formal guarantees~\cite{allgower2012nonlinear,borrelli2017predictive}. MPC most frequently relies on the Receding Horizon (RH) approach: an Optimisation Problem (OP) is solved at each step to yield a vector of future controls, only the first one is applied, and the whole procedure is repeated for the next step.

Sometimes, however, solving the OP within one control timestep can be computationally infeasible, and simplifying for brevity, one can choose among the following alternatives:

\begin{itemize}

\item increase the timestep --- but this can be prohibited by the physics of the control problem;

\item resort to \emph{explicit} MPC~\cite{alessio2009survey} --- but this can entail too demanding memory requirements and would not be feasible for large scale problems;

\item simplify the OP, for example via local linearisations --- but this makes global properties hard to enforce \cite{berberich2022linear};

\item reduce the OP dimensionality by shrinking the control horizon or by having some future controls depend on others, which is called ``Move Blocking'' (MB) \cite{shekhar2015optimal,cagienard2007move}.

\end{itemize}

Notice that to perform less optimisations one could also apply not just the first computed control sample as in the RH case, but up to all those that cover the (future) control horizon $N$. This is called OL-MPC for Open-Loop MPC, as the loop actually closes just at each $N$-th step, but does not remove the constraint of solving the OP within one timestep.

In our research, we propose a novel combination of OL-MPC and MB, accepting that the OP cannot be solved in one timestep and just requiring an overbound $N_{OP}$ of the number of steps required. In detail

\begin{itemize}

\item after the $k$-th optimisation we apply $N-N_B$ control samples, where $N_B \geq N_{OP}$ is the \emph{blocking horizon},

\item we start the $(k+1)$-th optimisation $N_{OP}$ steps before the above samples are exhausted, hence overlapping the control horizons,

\item and we constrain the first $N_{OP}$ controls from the $(k+1)$-th optimisation to equal those coming from the $k$-th one, applied -- as said -- while the $(k+1)$-th is running.

\end{itemize}

The main novelty is that in doing so we make the MB mechanism \textit{stateful}, which is why the name our technique DMB for \textit{Dynamic MB}.

In this paper, analogously to what has been done for other (static) MB strategies (see, e.g., \cite{gondhalekar2007recursive,schwickart2016flexible}), we analyse the DMB strategy so as to assess the level of sub-optimality with respect to a purely RH realisation of the same MPC scheme, that we take as reference but is infeasible owing to computational limits. We further show that asymptotic stability is preserved under some mild assumptions, and illustrate the limited decrease of performance on a simple numerical example. A preliminary version of this idea was proposed in \cite{leva2022overlapping}, where however no theoretical analysis of the proposed scheme was provided.

The remainder of the paper is as follows. In Section \ref{sec:Problem-statement}, we formally state the problem of interest and introduce the main terminology and notation. Section \ref{sec:DMB-MPC} discusses the DMB-MPC approach and illustrates its theoretical properties. A simple numerical example is proposed in Section \ref{sec:POC-example}. The paper is ended by some concluding remarks.

\section{Problem statement}
\label{sec:Problem-statement}


Consider the \emph{nonlinear} discrete-time dynamical system
\begin{equation}\label{eq:S}
	x(t+1)=f(x(t),u(t)),~~x(0)=x_{0},
\end{equation}
where $x(t) \in \mathcal{X} \subseteq \mathbb{R}^n$ is the state of the system at time $t \in \mathbb{R}^{n}$, laying in a \emph{finite dimensional} space $X$, while $u(t) \in U \subseteq \mathbb{R}^m$ is a controllable input and $f: X \times U \rightarrow X$. For a given state $x$, our goal would be to find a feedback control action $u=\mu_{\infty}(x)$ that minimizes the infinite horizon cost
\begin{equation}\label{eq:infinite_cost}
	J_{\infty}(x,u)=\sum_{k=0}^{\infty} \ell(x_{u}(k),u(k)),~~x_{u}(0)=x,
\end{equation}
where $x_{u}(k)$ denotes the state realized for some control sequence\footnote{$\mathcal{U}$ indicates the space of feasible control sequences.} $u \in \mathcal{U}$, and $\ell: X \times U \rightarrow \mathbb{R}_{0}^{+}$ is the running cost of the problem. The latter is here assumed to be \emph{positive definite} and \emph{proper} with respect to some compact set $A$.

However, minimizing \eqref{eq:infinite_cost} is often infeasible in practice, motivating the introduction of \emph{model predictive control} (MPC) strategies that focus on solving the following \textit{finite-horizon} alternative:
\begin{subequations}\label{eq:MPC}
	\begin{align}
		&\underset{u}{\mbox{minimize}}~~J_{N}(x,u)\\
		& \qquad ~ \mbox{s.t. }~x_{u}(k\!+\!1)\!=\!f(x_{u}(k),u(k)),~~ \forall k \in \mathcal{I}_{N},\\
		&\qquad \qquad ~ x_u(0)\!=\!x,\\
		&\qquad \qquad ~ x_{u}(k) \in X,~u(k) \in U,~~\forall k \in \mathcal{I}_{N},
	\end{align}
where $\mathcal{I}_{N=\{0,\ldots,N-1\}}$ and the cost is given by
\begin{equation}\label{eq:RH_cost}
	J_{N}(x,u)=\sum_{k=0}^{N-1} \ell(x_{u}(k),u(k)).
\end{equation}
\end{subequations}
Once the problem is solved, only the first control action is fed to the system, whereas the rest of the optimized input sequence is discarded. The problem in \eqref{eq:MPC} needs then to be solved again, with updated initial state, at the following sampling instant. By introducing the optimal value function $V_{N}(x)$ associated to this problem, namely,
\begin{equation}\label{eq:RH_ovf}
	V_{N}(x)=\inf_{u \in \mathcal{U}} J_{N}(x,u),
\end{equation}
a direct consequence of the Bellman's optimality principle is that the control law is finally selected as
\begin{equation}\label{eq:RH_law}
	\mu_{N}(x)=\arg\min_{u \in U}~\{V_{N-1}(f(x,u))+\ell(x,u)\}.
\end{equation}

Nonetheless, especially when $N$ is large, also this simplified strategy can eventually become unfeasible in practice, since it requires the solution of an optimization problem in real-time at each time step. This is indeed the case of fast sampled systems, where a single sampling instant might not be sufficient to solve the entire optimization problem. Under the assumption that a number of steps $N_{OP}(t)>1$ \footnote{We stress here that $N_{OP}(t)$ might potentially be time-varying.} are needed to solve \eqref{eq:MPC} at each time instant $t$, our objective in this paper is to propose a move-blocking strategy that allows us to $(i)$ approximate the solution of \eqref{eq:MPC} \textit{with a certified level of sub-optimality}, while $(ii)$ being compatible with the computational limits of the application at hand.

\section{Dynamic move blocking}
\label{sec:DMB-MPC}

In this work, we focus on two elements to reduce the computational burden characterizing standard receding horizon strategies: $(i)$ the degrees of freedom of \eqref{eq:MPC} and $(ii)$ the input actually fed to the system.

To limit the degrees of freedom and, thus, the computational complexity of the optimal control problem at time $t$, we leverage the idea of \textit{move-blocking} strategies \cite{} and consider the following \textquotedblleft variation\textquotedblright \ of the standard MPC problem:
\begin{subequations}\label{eq:blocked_MPC}
	\begin{align}
		&\underset{u}{\mbox{minimize}}~~J_{N}(x(t),u) \label{eq:blocked_MPCcost}\\
		& ~~\mbox{s.t. }~x_{u}(k\!+\!1)\!=\!f(x_{u}(n),u(n)),~~ \forall k \in \mathcal{I}_{N},\\
		&\qquad~~ x_u(0)\!=\!x(t),\\
		& \qquad~~ u(k)\!=\!\tilde{u}(N\!\!-\!\!N_{B}\!+\!k),~k\!=\!0,\ldots,N_{B}\!-\!1,\\
		&\qquad~~ x_{u}(k) \in X,~u(n) \in U,~~\forall k \in \mathcal{I}_{N},
	\end{align}
	where $x(t)$ is the measured (or estimated) state of the system at the time when the optimization is carried out, and $N_{B} \geq N_{OP}$ dictates the number of moves that are \emph{overlapped}, since they are here blocked to the optimal control actions obtained by solving this problem at the previous optimization step (here denoted as $\tilde{u} \in \mathcal{U}$). As the values of the ``blocked'' inputs depends on the optimization at the previous time instant, we will call such a strategy MPC with \textit{dynamic move blocking} (MPC-DMB) hereafter.
	\begin{remark}[$N_{B}$ and $N$]
		Clearly, the number $N_{B}$ of blocked actions must also satisfy $N_{B} \leq N$. The special case of $N_{B}=N$ would imply that the past sequence is fully used to control the system and only the action $u(t+N-1)$ is optimally computed. As we will show next, this choice is far from being optimal, since the last control action is computed without \textquotedblleft looking into the predicted future\textquotedblright \ of the state, ultimately jeopardizing closed-loop stability.
	\end{remark}
\end{subequations}

\subsection{Properties of MPC-DMB}
We now discuss some of the properties of the control law originated by the MPC-DMB approach, namely
\begin{subequations}
\begin{equation}\label{eq:blocked_law}
	\mu_{N}^{\mathrm{DMB}}(x)=\begin{bmatrix}
		\tilde{u}(N-N_{B})\\
		\vdots\\
		\tilde{u}(N-1)\\
		\mu_{N_{B}}(x_{N_{B}})
	\end{bmatrix},
\end{equation}
where $x_{N_{B}}$ compactly denotes the $N_{B}$-step ahead prediction of the system's state starting from the initial condition $x$ and using the blocked inputs $\tilde{u}(N-N_{B}+n)$, for $n=1,\ldots,N_{B}-1$, and
\begin{equation}\label{eq:DMB_lawLast}
	\mu_{N_{B}}(x_{N_{B}})\!=\!\arg\min_{u \in U}\{V_{N\!-\!N_{B}\!-\!1}(f(x_{N_{B}},u))\!+\!\ell(x_{N_{B}},u)\}.
\end{equation}
\end{subequations}

In the next Lemma, we will show that blocking the first $N_{B}$ actions in \eqref{eq:blocked_MPC} is equivalent to reformulate the problem as a standard MPC with \textit{reduced} horizon
\begin{subequations}\label{eq:blocked_MPCsimplified}
	\begin{align}
	&\underset{u}{\mbox{minimize}}~~J_{N-N_{B}}(x(t+N_{B}),u) \\
	& ~~\mbox{s.t. }~x_{u}(k\!+\!1)\!=\!f(x_{u}(k),u(k)),~~ \forall k \in \mathcal{I}_{N-N_{B}},\\
	&\qquad~~ x_u(0)\!=\!x(t+N_{B}),\\
	&\qquad~~ x_{u}(k) \in X,~u(k) \in U,~~\forall k \in \mathcal{I}_{N-N_{B}},
	\end{align}
	where $\mathcal{I}_{N-N_{B}}=\{0,\ldots,N-N_{B}\}$, the cost is defined as
	\begin{equation}
	J_{N-N_{B}}(x(t+N_{B}),u)=\!\!\!\!\!\!\sum_{k=0}^{N-N_{B}-1}\!\!\!\ell(x_{u}(k),u(k)),
	\end{equation}
\end{subequations}
and the initial state is replaced by a suitable \emph{prediction} of the state at time $t+N_{B}$ using the model equation:
\begin{equation}\label{eq:predicted_state}
x(t+N_{B})=f(x(t+N_{B}-1),\tilde{u}(N-1)).
\end{equation}
\begin{lemma}[Equivalence with reduced-horizon MPC]
	Solving \eqref{eq:blocked_MPC} is equivalent to tackling the reduced horizon problem \eqref{eq:blocked_MPCsimplified}, with initial condition given by the predicted state $x(t+N_{B})$ defined in \eqref{eq:predicted_state}.
\end{lemma}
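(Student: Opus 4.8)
The plan is to establish the equivalence by decomposing the objective \eqref{eq:RH_cost} along the split between the blocked (fixed) and free control actions, and then reindexing the free tail. First I would write the cost of \eqref{eq:blocked_MPC} as
\begin{equation*}
J_{N}(x(t),u)=\sum_{k=0}^{N_{B}-1}\ell(x_{u}(k),u(k))+\sum_{k=N_{B}}^{N-1}\ell(x_{u}(k),u(k)).
\end{equation*}
In the first summand every input is pinned by the blocking constraint to $u(k)=\tilde{u}(N-N_{B}+k)$; hence, starting from $x_{u}(0)=x(t)$ and iterating $x_{u}(k+1)=f(x_{u}(k),u(k))$, the states $x_{u}(0),\dots,x_{u}(N_{B})$ are completely determined \emph{before} any optimisation takes place. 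The first summand is therefore a constant with respect to the free variables $u(N_{B}),\dots,u(N-1)$, and can be dropped from the minimisation without altering its argmin.

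Next I would identify the terminal state of the blocked stage with the predicted state in \eqref{eq:predicted_state}: by the recursion above, $x_{u}(N_{B})=f(x_{u}(N_{B}-1),\tilde{u}(N-1))=x(t+N_{B})$. Introducing the shifted variables $\bar{x}(j):=x_{u}(N_{B}+j)$ and $\bar{u}(j):=u(N_{B}+j)$ for $j=0,\dots,N-N_{B}-1$, the remaining sum becomes $\sum_{j=0}^{N-N_{B}-1}\ell(\bar{x}(j),\bar{u}(j))=J_{N-N_{B}}(x(t+N_{B}),\bar{u})$, while the dynamics $\bar{x}(j+1)=f(\bar{x}(j),\bar{u}(j))$ with $\bar{x}(0)=x(t+N_{B})$ are exactly those of the reduced problem \eqref{eq:blocked_MPCsimplified}.

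It then remains to match the constraint sets. For $k<N_{B}$ the conditions $x_{u}(k)\in X$, $u(k)\in U$ involve only the predetermined blocked trajectory, so I would treat them as a standing feasibility requirement on $\tilde{u}$ inherited from the previous optimisation rather than as live constraints; for $k\ge N_{B}$ they map one-to-one, under the reindexing, onto the constraints $\bar{x}(j)\in X$, $\bar{u}(j)\in U$ of \eqref{eq:blocked_MPCsimplified}. The feasible sets of free tails thus coincide and the two objectives differ only by the additive constant computed above, so the two problems share the same minimiser; applying \eqref{eq:RH_law} to the reduced problem then recovers the last block $\mu_{N_{B}}(x_{N_{B}})$ of \eqref{eq:blocked_law}. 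The hard part will be the index bookkeeping around the predicted-state recursion---confirming that iterating $f$ with the blocked inputs from $x(t)$ reproduces \eqref{eq:predicted_state} with the correct shift---and the care needed to justify offloading the first-$N_{B}$ state constraints to a feasibility assumption on $\tilde{u}$.
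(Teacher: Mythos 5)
Your proposal is correct and follows essentially the same route as the paper's own proof: split the cost into the blocked part (a pre-computable constant, since the blocked inputs and hence the states $x_u(0),\dots,x_u(N_B)$ are fixed) plus the free tail, identify $x_u(N_B)$ with the predicted state $x(t+N_B)$ of \eqref{eq:predicted_state}, and reindex the tail to obtain \eqref{eq:blocked_MPCsimplified}. Your explicit treatment of the first-$N_B$ constraints as a feasibility condition inherited from the previous optimisation is the same point the paper makes in a footnote (constraints satisfied ``by design''), just stated more carefully.
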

\begin{proof}
	Since the first $N_{B}$ actions are blocked, the cost in \eqref{eq:blocked_MPCcost} can be rewritten as
	\begin{subequations}
		\begin{equation}
		J_{N}(x(t),u)=J_{N_{B}}(x(t),\tilde{u})+\!\!\sum_{k=N_{B}}^{N-1}\!\!\ell(x_{u}(k),u(k)),
		\end{equation}
		where
		\begin{equation}
		J_{N_{B}}(x(t),\tilde{u})=\sum_{k=0}^{N_{B}-1} \ell(x_{u}(k),\tilde{u}(N\!-\!N_{B}+k)),
		\end{equation}
	\end{subequations}
	and $x_{u}(k)$, for $k=1,\ldots,N_{B}$ can be computed based on the predictive model in \eqref{eq:S}, namely\footnote{Note that both the inputs used for prediction and the associated state satisfy the constraints by design.}
	\begin{align*}
	&x_{u}(1)=x(t+1)\!=\!f(x(t),\tilde{u}(N-N_{B})),\\
	&x_{u}(k)\!=\!f(x_{u}(k-1),\tilde{u}(N\!-\!N_{B}\!+\!k\!-\!1)),~k\!=\!2,\ldots,N_{B},
	\end{align*}
	starting from $x_{u}(0)=x(t)$. Accordingly, $J_{N_{B}}(x(t),\tilde{u})$ is fixed based on this pre-computable state sequence and, thus, it does not impact on the solution of \eqref{eq:blocked_MPC}. At the same time, due to the dynamics of the controlled system, the only state that directly impacts the computation of the optimal input sequence is $x_{u}(N_{B})$. Based on these considerations, the optimization problem in \eqref{eq:blocked_MPC} can be equivalently recast as
	\begin{subequations}\label{eq:blocked_MPCsimpleproof}
		\begin{align}
		&\underset{u}{\mbox{minimize}}~~\sum_{k=N_{B}}^{N-1} \ell(x_{u}(k),u(k))\\
		& \mbox{s.t. }x_{u}(k\!+\!1)\!=\!f(x_{u}(k),u(k)),~ k\!=\!N_{B},\ldots,N\!-\!1,\\
		&\quad~ x_u(N_{B})\!=\!x(t+N_{B}),\\
		&\quad~ x_{u}(k) \in X,~u(k) \in U,~~ k=N_{B},\ldots,N-1.
		\end{align}
	\end{subequations}
	The reduced-horizon MPC problem in \eqref{eq:blocked_MPCsimplified} directly stems from this result based on a simple change of indexes, \emph{i.e.,} considering a new index $k=n-N_{B}$ and shifting the limits of the sum in the loss and the constraints accordingly.
\end{proof}
\begin{algorithm}[!tb]
	\caption{MPC-DMB at time $t$}
	\label{alg:OH-MPC}
	~~\textbf{Input}: initial condition $x(t)$; previous optimal sequence $t-1$, $\{\tilde{u}(k)\}_{k=0}^{N-1}$.
	\vspace*{.1cm}
	\hrule\vspace*{.1cm}
	\begin{enumerate}[label=\arabic*., ref=\theenumi{}]
		\item \textbf{for} $k=0,\ldots,N_{B}-1$
		\begin{enumerate}[label=\arabic*., ref=\theenumi{}]
			\item \textbf{predict} $x(t+k+1)\!=\!f(x(t+k),\tilde{u}(N\!-\!N_{B}+k))$;
		\end{enumerate}
		\item \textbf{solve} the blocked MPC problem in \eqref{eq:blocked_MPCsimplified};
		\item \textbf{extract} the first optimal action $\mu_{N_{B}}(x(t+N_B))\!=\!u(0)$;
		\item \textbf{retrieve} $x(t+N_{B})$;
		\item \textbf{apply} $$u(t+k)=\begin{cases}
		\tilde{u}(N\!-\!N_{B}\!+\!k), &\mbox{if } 0 \leq k \leq N_{B}\!-\!1,\\
		\mu_{N_{B}}(x(t+N_{B})), &\mbox{if } k=N_{B};
		\end{cases}$$
		\item \textbf{update} $\tilde{u}=\{u(t+k)\}_{k=0}^{N_{B}}$;
		\item \textbf{set} $x(0)=x(t+N_{B})$;
		\item \textbf{shift to} $t+N_{B}$;
	\end{enumerate}
	\vspace*{.1cm}
\end{algorithm}
Once the solution of \eqref{eq:blocked_MPC} is computed, the first $N_{B}$ samples of the control sequence (namely, those fixed based on the results of the previous optimization step and the first optimal action over the reduced horizon $N-N_{B}$) are kept, while the rest of the optimal sequence is discarded and the optimization window shifts of $N_{B}$ steps. The resulting MPC-DMB is summarized in Algorithm~\ref{alg:OH-MPC}.

Let us define the optimal value function associated to \eqref{eq:blocked_law} as
\begin{subequations}
\begin{equation}
	V_{N}^{\mathrm{DMB}}(x)=\delta(x)+V_{N-N_{B}}(x_{N_{B}}),
\end{equation}
where
\begin{align}
	& V_{N\!-\!N_{B}}(x_{N_{B}})\!=\!\min_{u \in U}\{V_{N\!-\!N_{B}\!-\!1}(f(x_{N_{B}},u))\!+\!\ell(x_{N_{B}},u)\},\label{eq:reducedHorizon_ovf}\\
	& \delta(x)=\sum_{k=0}^{N_{B}-1}\ell(x_{\tilde{u}}(k),\tilde{u}(N-N_{B}+k)),
\end{align}
with $x_{\tilde{u}}(0)=x$.
\end{subequations}
By comparing the value functions of the receding horizon strategy \eqref{eq:RH_ovf} and the dynamic move blocking approach, we can formalize the following result.
\begin{lemma}[Sub-optimality of MPC-DMB]\label{lemma3}
	Consider a receding horizon $N \in \mathbb{N}$ and a blocking horizon $N_{B} \in \mathbb{N}$, with $1\leq N_{B} \leq N$. Then the following holds:
	\begin{equation}\label{eq:cost_comparison}
		V_{N}(x)\leq V_{N}^{\mathrm{DMB}}(x), \  \forall x \in X.
	\end{equation}
\end{lemma}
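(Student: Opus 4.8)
The plan is to recognize the DMB value function $V_N^{\mathrm{DMB}}(x)$ as the optimal value of the \emph{same} finite-horizon objective $J_N(x,u)$ minimized over a \emph{restricted} feasible set, and then to appeal to the elementary monotonicity of the infimum under set inclusion. The whole proof rests on the observation that move blocking can only remove degrees of freedom, never add them.

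First I would make explicit that problem \eqref{eq:blocked_MPC} carries exactly the same cost \eqref{eq:RH_cost} as the reference problem \eqref{eq:MPC}, the only difference being the additional equality constraints $u(k)=\tilde{u}(N-N_{B}+k)$ for $k=0,\ldots,N_{B}-1$. Denoting by $\mathcal{U}_{N}(x)\subseteq\mathcal{U}$ the feasible set of the reference problem and by $\mathcal{U}_{N}^{\mathrm{DMB}}(x)$ that of the blocked problem, the blocking constraints merely shrink the admissible inputs, so that $\mathcal{U}_{N}^{\mathrm{DMB}}(x)\subseteq\mathcal{U}_{N}(x)$. The one point that genuinely needs checking here — and the main (if mild) obstacle — is that every sequence in $\mathcal{U}_{N}^{\mathrm{DMB}}(x)$ is indeed admissible for the reference problem, i.e.\ that the blocked moves $\tilde{u}(N-N_{B}+k)$ and the states they generate still satisfy $u(k)\in U$ and $x_{u}(k)\in X$. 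This holds by construction, as already noted in the proof of Lemma~1.

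Next I would identify $V_{N}^{\mathrm{DMB}}(x)$ with $\inf_{u\in\mathcal{U}_{N}^{\mathrm{DMB}}(x)}J_{N}(x,u)$. This is precisely the content of Lemma~1 combined with the definition of the DMB value function: fixing the first $N_{B}$ moves turns the partial cost $\delta(x)$ into a pre-computable constant, and optimally steering the remaining tail over the reduced horizon yields $V_{N-N_{B}}(x_{N_{B}})$ through \eqref{eq:reducedHorizon_ovf}. Hence $V_{N}^{\mathrm{DMB}}(x)=\delta(x)+V_{N-N_{B}}(x_{N_{B}})=\min_{u\in\mathcal{U}_{N}^{\mathrm{DMB}}(x)}J_{N}(x,u)$.

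Finally, since the infimum of a fixed function over a larger set cannot exceed its infimum over a subset, the inclusion $\mathcal{U}_{N}^{\mathrm{DMB}}(x)\subseteq\mathcal{U}_{N}(x)$ immediately delivers
\begin{equation*}
V_{N}(x)=\inf_{u\in\mathcal{U}_{N}(x)}J_{N}(x,u)\leq\inf_{u\in\mathcal{U}_{N}^{\mathrm{DMB}}(x)}J_{N}(x,u)=V_{N}^{\mathrm{DMB}}(x),
\end{equation*}
which is the claim. Beyond the admissibility check of the blocked trajectory, I expect no real difficulty: the result is essentially a feasibility-plus-monotonicity argument, reflecting the intuitive fact that constraining a subset of the control moves to prescribed values can only degrade the attainable optimal cost.
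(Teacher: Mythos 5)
Your proof is correct, but it follows a genuinely different route from the paper's. The paper proves \eqref{eq:cost_comparison} by iterating Bellman's optimality principle: starting from $V_{N}(x)=\min_{u\in U}\{V_{N-1}(f(x,u))+\ell(x,u)\}$, it substitutes the blocked move $\tilde{u}(N-N_{B}+k)$ as a (generally suboptimal) control at each of the first $N_{B}$ stages, accumulating running costs and lowering the value-function index step by step until the right-hand side is recognized as $\delta(x)+V_{N-N_{B}}(x_{N_{B}})=V_{N}^{\mathrm{DMB}}(x)$. You instead make a one-shot argument: the blocked problem \eqref{eq:blocked_MPC} has the same cost \eqref{eq:RH_cost} as \eqref{eq:MPC} but a feasible set shrunk by the blocking equalities, so monotonicity of the infimum under set inclusion gives the claim, once $V_{N}^{\mathrm{DMB}}(x)$ is identified (via the equivalence lemma, Lemma~1, and \eqref{eq:reducedHorizon_ovf}) as the optimal value over the restricted set, and once feasibility of the blocked trajectory is checked (which, as you note, holds by construction). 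Your version is more elementary and transparent, and it sidesteps the index bookkeeping that the paper's iteration requires (where, incidentally, the summation limits in the paper's final display appear garbled); its price is that it leans on Lemma~1, whereas the paper's proof is self-contained. One further trade-off worth noting: the paper claims its iterative reasoning also yields the subsequent remark comparing two blocking horizons $N_{B}^{(1)}<N_{B}^{(2)}$, and that comparison does \emph{not} reduce to set inclusion, since the feasible sets associated with different blocking horizons fix $u(0)$ to different values of $\tilde{u}$ and hence are not nested; so your technique, while cleaner for this lemma, does not extend to that statement.
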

\begin{proof}
	According to Bellman's optimality principle, the value function $V_{N}(x)$ can be rewritten as
	\begin{equation}
		V_{N}(x)=\min_{u \in U} \{V_{N-1}(f(x,u))+\ell(x,u)\}.
	\end{equation}
	Therefore, by picking any $\mu(x)$ different from the receding-horizon controller $\mu_{N}(x)$ in \eqref{eq:RH_law}, the following holds by definition:
	\begin{equation}
		V_{N}(x)\!\leq\! \ell(x,\mu(x))+\min_{u \in U}\{V_{N\!-\!2}(f(x_{\mu},u))+\ell(x_{\mu},u)\},
	\end{equation}
	where $x_{\mu}=f(x,\mu(x))$. As a particular case, the previous inequality is thus verified when setting $\mu(x)=\tilde{u}(N-N_{B})$, \emph{i.e.,} considering the first control action resulting from the dynamic move blocking strategy. Analogously, by considering again $\mu(x_{\mu})$ different from $\mu_{N-1}(x_{\mu})$, we further have that
	\begin{align}
		\nonumber V_{N}(x) \leq& \sum_{k=0}^{1}\ell(x_{\mu}(k),\mu(x_{\mu}(k)))+\\
		&~~+\min_{u \in U}\{V_{N\!-\!3}(f(x_{\mu}(2),u))\!+\!\ell(\tilde{x}_{\mu}(2),u)\},
	\end{align}
	where
	\begin{align*}
		&x_{\mu}(0)=x,\\
		&x_{\mu}(k)=f(x_{\mu}(k-1),\mu(x_{\mu}(k-1))),~~k=1,\ldots,2.
 	\end{align*}
 	Once again, specifically $\mu(x_{\mu})=\tilde{u}(N-N_{B}+1)$ is only a particular choice for the sub-optimal law and, thus, the previous inequality holds when using the dynamic move blocking strategy. Exploiting the same reasoning iteratively, it is straightforward to prove that
	\begin{align}
		\nonumber &V_{N}(x)\leq\!\!\!\sum_{k=0}^{N-N_{B}}\ell(x_{\mu}(k),\mu(x_{\mu}(k)))+\\
		&~~+\min_{u \in U}\{V_{N\!-\!N_{B}\!-\!1}(f(x_{\mu}(N_{B}),u))\!+\!\ell(\tilde{x}_{\mu}(N_{B}),u)\},
	\end{align}
	with
	\begin{align*}
		&x_{\mu}(0)=x,\\
		&x_{\mu}(k)=f(x_{\mu}(k-1),\mu(x_{\mu}(k-1))),~~k\!=\!1,\ldots,N-N_{B}.
	\end{align*}
	The proof is thus concluded by simply replacing $\mu(x_{\mu}(k))$ with the $k$-th component $[\mu_{N}^{\mathrm{DMB}}(x)]_{k}$ of $\mu_{N}^{\mathrm{DMB}}(x)$ in \eqref{eq:blocked_law}, for $k=0,\ldots,N_{B}-1$.
\end{proof}
\begin{remark}[The role of $N_B$ for suboptimality]
	Let us consider two possible choices for $N_{B}$, denoted as $N_{B}^{(1)}$ and $N_{B}^{(2)}$ respectively. Let $N_{B}^{(1)}<N_{B}^{(2)}<	N$. As a straightforward consequence of Bellman's optimality principle, it holds that
	\begin{equation}
		V_{N}^{\mathrm{DMB},(1)}(x)\leq V_{N}^{\mathrm{DMB},(2)}(x)
	\end{equation}
	where $V_{N}^{\mathrm{DMB},(i)}(x)$ corresponds to $V_{N}^{\mathrm{DMB}}(x)$ with $N_{B}=N_{B}^{(i)}$, for $i=1,2$, This can be easily proven by following the same reasoning exploited in the proof of Lemma~\ref{lemma3}.
\end{remark}

Let us now introduce the \textquotedblleft ideal\textquotedblright \ value function
\begin{equation}\label{eq:infinite_ovf}
	V_{\infty}(x_{0})=\inf_{u \in \mathcal{U}} J_{\infty}(x(0),u),
\end{equation}
which one would attain by minimizing the (practically unfeasible) infinite-horizon loss in \eqref{eq:infinite_cost}, and let us denote with $V_{\infty}^{\mu_{N}}(x)$ and $V_{\infty}^{\mu_{N}^{\mathrm{DMB}}}(x)$ the optimal value functions obtained when using the receding-horizon controller and the dynamic move blocking strategy over an infinite horizon, respectively, namely,
\begin{subequations}
	\begin{align}
		&V_\infty^{\mu_{N}}(x)=\sum_{k=0}^{\infty} \ell(x_{\mu_{N}}(k),\mu_{N}(x_{\mu_{N}}(k))),\\
		\nonumber & V_\infty^{\mu_{N}^{\mathrm{DMB}}}(x)=\delta(x)+\sum_{k=N_{B}}^{\infty} \ell(x_{\mu_{N}^{\mathrm{DMB}}}(k),\mu_{N}(x_{\mu_{N}^{\mathrm{DMB}}}(k)))=\\
		& ~\qquad\qquad =\delta(x)+V_{\infty}^{\mu_{N_{B}}}(x_{\mu_{N}^{\mathrm{DMB}}}(N_{B})). \label{eq:infty_dmb}
	\end{align}
\end{subequations}
Due to the suboptimality of the receding-horizon controller and the MPC-DMB law, these three functions satisfy the relationship
\begin{equation}\label{eq:infinite_horizons}
	V_{\infty}(x)\leq V_{\infty}^{\mu_{N}}(x)\leq V_{\infty}^{\mu_{N}^\mathrm{DMB}}(x),
\end{equation}
as formalized in the following Lemma.
\begin{lemma}[Ranking of infinite-horizon value functions]\label{lemma:ranking}
	Let $N, N_{B} \in \mathbb{N}$ be the prediction and blocking horizons characterizing the problem in \eqref{eq:blocked_MPC}, with $N_{B}\geq 1$. Then the relationship in \eqref{eq:infinite_horizons} holds for all $x \in X$.
\end{lemma}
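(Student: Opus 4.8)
The plan is to split the chain in \eqref{eq:infinite_horizons} into its two constituent inequalities, which are of rather different nature, and treat them separately.

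The left inequality $V_{\infty}(x)\leq V_{\infty}^{\mu_{N}}(x)$ I would obtain directly from the definition of $V_{\infty}$ as an infimum in \eqref{eq:infinite_ovf}. Running the receding-horizon law $\mu_{N}$ in closed loop from $x$ produces one specific admissible input sequence $\{\mu_{N}(x_{\mu_{N}}(k))\}_{k\geq 0}\in\mathcal{U}$, whose infinite-horizon cost is by construction exactly $V_{\infty}^{\mu_{N}}(x)$. Since $V_{\infty}(x)$ is the infimum of $J_{\infty}(x,\cdot)$ over \emph{all} admissible sequences, it cannot exceed the cost of this particular one. The only point to verify is that the closed-loop sequence is admissible (the state stays in $X$ and the inputs in $U$), i.e. recursive feasibility of the receding-horizon scheme, which I would import from the standing assumptions; no Bellman machinery is needed here.

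For the right inequality $V_{\infty}^{\mu_{N}}(x)\leq V_{\infty}^{\mu_{N}^{\mathrm{DMB}}}(x)$ I would start from the decomposition already recorded in \eqref{eq:infty_dmb}, $V_{\infty}^{\mu_{N}^{\mathrm{DMB}}}(x)=\delta(x)+V_{\infty}^{\mu_{N}}(z)$, with $z:=x_{\mu_{N}^{\mathrm{DMB}}}(N_{B})$ the state reached after applying the $N_{B}$ blocked inputs, $\delta(x)$ the running cost accumulated along them, and the tail being the receding-horizon closed-loop cost from $z$ (the explicit sum in \eqref{eq:infty_dmb} applies the law $\mu_{N}$). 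After this decomposition the claim reduces to the single dissipation-type inequality
$$V_{\infty}^{\mu_{N}}(x)\leq\delta(x)+V_{\infty}^{\mu_{N}}(z),$$
that is, the receding-horizon closed-loop cost should not increase if one first spends the $N_{B}$ blocked moves (paying $\delta(x)$) and only afterwards resumes $\mu_{N}$ from $z$. I would establish this by peeling off one blocked move at a time, mirroring the iterated Bellman step in the proof of Lemma~\ref{lemma3}: the one-step ingredient is $V_{\infty}^{\mu_{N}}(y)\leq\ell(y,v)+V_{\infty}^{\mu_{N}}(f(y,v))$ evaluated at $v=\tilde{u}(\cdot)$ along the blocked trajectory, and summing these $N_{B}$ inequalities telescopes to the displayed bound.

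The main obstacle is exactly this one-step inequality. In Lemma~\ref{lemma3} the relevant object is the finite-horizon \emph{optimal} value function $V_{N}$, which by Bellman's principle is a genuine sub-solution, $V_{N}(y)\leq\ell(y,v)+V_{N-1}(f(y,v))$ for every $v$, and whose horizon decreases at each peeled step so that the recursion terminates. The closed-loop cost $V_{\infty}^{\mu_{N}}$ is instead the cost of a \emph{fixed}, greedy controller: it satisfies Bellman only with equality along its own trajectory ($v=\mu_{N}(y)$), and for an arbitrary $v$ it need not be a sub-solution, because $\mu_{N}$ minimizes $\ell(y,\cdot)+V_{N-1}(f(y,\cdot))$ rather than $\ell(y,\cdot)+V_{\infty}^{\mu_{N}}(f(y,\cdot))$. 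To close this gap rigorously I would either (i) route the estimate through the finite-horizon suboptimality of Lemma~\ref{lemma3} applied at each re-optimization and pass to the limit as the number of accumulated stages grows, or (ii) invoke the positive definiteness and properness of $\ell$ together with the stabilizing assumptions used later for asymptotic stability, under which $V_{\infty}^{\mu_{N}}$ acts as a Lyapunov function and the required dissipation inequality holds. Turning this limiting/Lyapunov argument into a clean statement, rather than appealing to the finite-horizon ordering informally, is where the real effort lies; the remaining index bookkeeping and the telescoping sum are routine.
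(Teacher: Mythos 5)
Your treatment of the left inequality $V_{\infty}(x)\leq V_{\infty}^{\mu_{N}}(x)$ is correct and complete: the closed-loop sequence generated by $\mu_{N}$ is one admissible element of $\mathcal{U}$, so the infimum defining $V_{\infty}(x)$ in \eqref{eq:infinite_ovf} cannot exceed its cost. The genuine gap is in the right inequality $V_{\infty}^{\mu_{N}}(x)\leq V_{\infty}^{\mu_{N}^{\mathrm{DMB}}}(x)$, and you have identified it yourself: your telescoping argument needs the one-step sub-solution property $V_{\infty}^{\mu_{N}}(y)\leq\ell(y,v)+V_{\infty}^{\mu_{N}}(f(y,v))$ for arbitrary $v$, which fails because $V_{\infty}^{\mu_{N}}$ is the closed-loop cost of a fixed greedy controller, not an optimal value function; it satisfies Bellman's equation with equality only along its own trajectory. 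Your proposal stops at exactly this point: neither remedy (i), routing through the finite-horizon ordering and passing to the limit, nor remedy (ii), a Lyapunov/dissipation argument, is actually carried out. As submitted, the right-hand inequality of \eqref{eq:infinite_horizons} is asserted but not proven, so this is a proof plan with an honestly declared hole rather than a proof.

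For comparison, the paper's own proof is precisely your route (i), executed with the brevity you were trying to avoid: it writes the Bellman decompositions of $V_{N}(x)$ and $V_{N}^{\mathrm{DMB}}(x)$, observes that $\delta(x)>0$ for $N_{B}\geq 1$ so that the finite-horizon ordering $V_{N}(x)\leq V_{N}^{\mathrm{DMB}}(x)$ of Lemma~\ref{lemma3} (i.e.\ \eqref{eq:cost_comparison}) applies, and then states that \eqref{eq:infinite_horizons} \textquotedblleft follows when considering $N \rightarrow \infty$\textquotedblright. It does not spell out how a limit of the finite-horizon \emph{optimal} value functions yields an ordering of the \emph{closed-loop} costs $V_{\infty}^{\mu_{N}}$ and $V_{\infty}^{\mu_{N}^{\mathrm{DMB}}}$, which are costs of fixed feedback laws and not limits of $V_{N}$, $V_{N}^{\mathrm{DMB}}$ --- this is exactly the passage you flag as the hard part. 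So your diagnosis is sharper than the paper's write-up; but to obtain a complete argument you would still have to do what the paper skips, e.g.\ apply the comparison of Lemma~\ref{lemma3} afresh at every re-optimization instant along the two closed-loop trajectories and show, using the positive definiteness and properness of $\ell$ to control the tails, that these stage-wise bounds accumulate into the infinite-horizon ordering.
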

\begin{proof}
	Based on the definition of $\mu_{N}^{\mathrm{DMB}}(x)$ and its last component $\mu_{N_{B}}(x)$, it holds that
	\begin{align}
		& V_{N}(x)=V_{N-1}(f(x,\mu_{N}(x)))+\ell(x,\mu_{N}(x)),\\
		\nonumber & V_{N}^{\mathrm{DMB}}(x)\!=\!\delta(x)\!+\!V_{N\!-\!N_{B}\!-\!1}(f(x_{\mu^{\mathrm{DMB}}},\mu_{N_{B}}(x_{\mu^{\mathrm{DMB}}})))+\\
		&\qquad \qquad \qquad \qquad \qquad +\ell(x_{\mu^{\mathrm{DMB}}},\mu_{N_{B}}(x_{\mu^{\mathrm{DMB}}})).
	\end{align}
	Since $N_{B} \neq 0$, $\delta(x) >0$, the MPC-DMB strategy is suboptimal with respect to a standard receding-horizon and the inequality in \eqref{eq:cost_comparison} is satisfied. Based on this result and on the previous definition, \eqref{eq:infinite_horizons} follows when considering $N \rightarrow \infty$.
\end{proof}
\begin{remark}[Corollary of Lemma \ref{lemma:ranking}]
	The suboptimality of the dynamic move blocking strategy is lower-bounded by that of a standard MPC approach, namely
	\begin{equation}
		\frac{V_{\infty}^{\mu_{N}}(x)-V_{\infty}}{V_{\infty}} \leq \frac{V_{\infty}^{\mu_{N}^{\mathrm{DMB}}}(x)-V_{\infty}}{V_{\infty}},\vspace{0.1cm}
	\end{equation}
\end{remark}

Starting from these results, we now follow \cite{Grune2008} toward detecting an upper-bound on the blocked horizon $N_{B}$ preserving the properties characterizing the standard receding horizon strategy and to characterize the suboptimality of the proposed approach. To this end, let us recall the following proposition.
\begin{proposition}[\cite{Grune2008}]
	Consider a generic feedback law $\tilde{\mu: X \rightarrow U}$ and a function $\tilde{V}: X \rightarrow \mathbb{R}_{0}^{+}$ satisfying the inequality
	\begin{equation}\label{eq:key_inequality}
		\tilde{V}(x)\geq \tilde{V}(f(x,\tilde{\mu}(x)))+\alpha \ell(x,\tilde{\mu}(x)),
	\end{equation}
	for some $\alpha \in [0,1]$ and $x \in X$. Then, for all $x \in X$, the following estimate holds:
	\begin{equation}\label{eq:key_ineq2}
		\alpha V_{\infty} \leq \alpha V_{\infty}^{\tilde{\mu}}(x) \leq \tilde{V}(x).
	\end{equation}
\end{proposition}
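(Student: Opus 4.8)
The plan is to read \eqref{eq:key_inequality} as a relaxed one-step decrease (Lyapunov-type) condition and to telescope it along the trajectory that the feedback $\tilde{\mu}$ generates in closed loop. First I would fix $x\in X$ and define the closed-loop sequence $x_{\tilde{\mu}}(0)=x$, $x_{\tilde{\mu}}(k+1)=f(x_{\tilde{\mu}}(k),\tilde{\mu}(x_{\tilde{\mu}}(k)))$, and then apply \eqref{eq:key_inequality} at each point of this sequence, obtaining
$$\tilde{V}(x_{\tilde{\mu}}(k)) \geq \tilde{V}(x_{\tilde{\mu}}(k+1)) + \alpha\,\ell\big(x_{\tilde{\mu}}(k),\tilde{\mu}(x_{\tilde{\mu}}(k))\big).$$

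Summing these inequalities over $k=0,\ldots,K-1$ collapses the $\tilde{V}$ terms telescopically and leaves
$$\tilde{V}(x) \geq \tilde{V}(x_{\tilde{\mu}}(K)) + \alpha \sum_{k=0}^{K-1}\ell\big(x_{\tilde{\mu}}(k),\tilde{\mu}(x_{\tilde{\mu}}(k))\big).$$
Since $\tilde{V}$ takes values in $\mathbb{R}_0^+$, I would discard the nonnegative terminal term $\tilde{V}(x_{\tilde{\mu}}(K))$, which shows that every partial sum of running costs is bounded above by $\tilde{V}(x)/\alpha$. Because $\ell\geq 0$ the partial sums are monotone nondecreasing in $K$, so they converge, and letting $K\to\infty$ yields $\alpha\,V_\infty^{\tilde{\mu}}(x)\leq\tilde{V}(x)$, i.e.\ the right-hand estimate in \eqref{eq:key_ineq2}.

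The left-hand estimate is then immediate from optimality: by \eqref{eq:infinite_ovf}, $V_\infty(x)$ is the infimum of $J_\infty(x,\cdot)$ over all admissible control sequences, while the closed-loop law $\tilde{\mu}$ realizes one particular admissible sequence of cost $V_\infty^{\tilde{\mu}}(x)$; hence $V_\infty(x)\leq V_\infty^{\tilde{\mu}}(x)$, and multiplying by $\alpha\geq 0$ preserves the inequality. The only delicate step I anticipate is the passage to the limit: the point is not that $\tilde{V}(x_{\tilde{\mu}}(K))$ vanishes (it need not), but that its nonnegativity lets it be dropped, after which monotonicity together with the uniform bound $\tilde{V}(x)/\alpha$ guarantees convergence of the series. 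The degenerate case $\alpha=0$ requires no work, since then both lower estimates reduce to the trivial $0\leq\tilde{V}(x)$.
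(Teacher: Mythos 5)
Your proof is correct, and it is essentially the canonical argument: the paper itself states this proposition without proof, importing it from the cited reference \cite{Grune2008}, where the result (relaxed dynamic programming) is established exactly by the telescoping argument you give --- summing the one-step decrease condition along the closed-loop trajectory, dropping the nonnegative terminal value $\tilde{V}(x_{\tilde{\mu}}(K))$, and passing to the limit by monotonicity of the partial sums. Your handling of the two delicate points (the limit $K\to\infty$ via boundedness rather than vanishing of $\tilde{V}(x_{\tilde{\mu}}(K))$, and the degenerate case $\alpha=0$), together with the observation that the closed-loop sequence is an admissible competitor for the infimum defining $V_{\infty}(x)$, is exactly what is needed; nothing is missing.
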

By focusing on the reduced-horizon problem in \eqref{eq:blocked_MPCsimplified}, whose associated optimal value function is $V_{N-N_{B}}(x_{N_{B}})$ in \eqref{eq:reducedHorizon_ovf}, we introduce the following result.
\begin{lemma}[Infinite vs reduced horizon value functions]\label{eq:key_lemma}
	Consider $N,N_{B} \in \mathbb{N}$, with $N_{B} \in [1,N]$, and the feedback law $\mu_{N_{B}}(x_{N_{B}})$ defined as in \eqref{eq:DMB_lawLast}. Assume that
	\begin{align}\label{eq:condition1}
		\nonumber V_{N\!-\!N_{B}}(x_{N_{B}}^{+})&-V_{N\!-\!N_{B}\!-\!1}(x_{N_{B}}^{+})\leq\\ &\qquad\qquad (1-\alpha)\ell(x_{N_{B}},\mu_{N_{B}}(x_{N_{B}}))),
	\end{align}
	holds for some $\alpha \in [0,1]$ and all $x \in X$, with $x_{N_{B}}^{+}=f(x_{N_{B}},\mu_{N_{B}}(x_{N_{B}})$. Then, $V_{N\!-\!N_{B}}(x_{N_{B}})$ satisfies \eqref{eq:key_inequality} and, thus
	\begin{equation}\label{eq:wanted1}
		\alpha V_{\infty}^{\mu_{N_{B}}}(x_{N_{B}})\leq V_{N\!-\!N_{B}}(x_{N_{B}}),
	\end{equation}
	for all $x \in X$.
\end{lemma}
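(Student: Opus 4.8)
The plan is to recognise this lemma as an instance of the relaxed dynamic-programming principle encapsulated in the Proposition borrowed from \cite{Grune2008}: I only need to exhibit a concrete pair $(\tilde{V},\tilde{\mu})$ for which the key inequality \eqref{eq:key_inequality} holds, and then \eqref{eq:wanted1} will drop out as the right-hand estimate of \eqref{eq:key_ineq2}. The natural candidates are $\tilde{V}=V_{N-N_{B}}$ (the reduced-horizon value function of \eqref{eq:reducedHorizon_ovf}) and $\tilde{\mu}=\mu_{N_{B}}$ (the minimizer defined in \eqref{eq:DMB_lawLast}), evaluated at the point $x_{N_{B}}$. So the whole task reduces to checking that
\[
V_{N-N_{B}}(x_{N_{B}})\geq V_{N-N_{B}}(x_{N_{B}}^{+})+\alpha\,\ell(x_{N_{B}},\mu_{N_{B}}(x_{N_{B}})),
\]
with $x_{N_{B}}^{+}=f(x_{N_{B}},\mu_{N_{B}}(x_{N_{B}}))$, since this is exactly \eqref{eq:key_inequality} specialised to the chosen pair.

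First I would exploit Bellman's principle in the form \eqref{eq:reducedHorizon_ovf}, together with the crucial fact that $\mu_{N_{B}}$ is by definition \eqref{eq:DMB_lawLast} the \emph{exact} minimizer of that one-step problem. This turns the infimum into an equality and yields the recursion
\[
V_{N-N_{B}}(x_{N_{B}})=V_{N-N_{B}-1}(x_{N_{B}}^{+})+\ell(x_{N_{B}},\mu_{N_{B}}(x_{N_{B}})).
\]
The obstacle is now apparent: the recursion naturally produces $V_{N-N_{B}-1}$ at the successor state, whereas the inequality I must prove carries $V_{N-N_{B}}$ at the \emph{same} state. Bridging this horizon mismatch is precisely the role of hypothesis \eqref{eq:condition1}, which bounds the gap $V_{N-N_{B}}(x_{N_{B}}^{+})-V_{N-N_{B}-1}(x_{N_{B}}^{+})$ by $(1-\alpha)\ell(x_{N_{B}},\mu_{N_{B}}(x_{N_{B}}))$. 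Substituting the recursion into the target inequality and cancelling, I would reduce the claim verbatim to \eqref{eq:condition1}; hence the assumption is not merely sufficient but tight, and the verification of \eqref{eq:key_inequality} is pure algebra once the identity and the bound are in hand.

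The only genuine subtlety, and what I expect to be the main point needing care, is the horizon-index bookkeeping: one must make sure the Bellman step decrements the horizon from $N-N_{B}$ to $N-N_{B}-1$ consistently with the indexing of the reduced problem \eqref{eq:blocked_MPCsimplified}, and that \eqref{eq:condition1} is invoked at the correct state $x_{N_{B}}^{+}$ for all $x\in X$ rather than a single point. With \eqref{eq:key_inequality} established for $(\tilde{V},\tilde{\mu})=(V_{N-N_{B}},\mu_{N_{B}})$, I would finally apply the cited Proposition and read off the right-hand estimate of \eqref{eq:key_ineq2}, namely $\alpha V_{\infty}^{\mu_{N_{B}}}(x_{N_{B}})\leq V_{N-N_{B}}(x_{N_{B}})$, which is exactly \eqref{eq:wanted1}, completing the argument.
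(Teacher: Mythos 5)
Your proposal is correct and follows essentially the same route as the paper: it instantiates the Gr\"une relaxed dynamic programming proposition with $(\tilde{V},\tilde{\mu})=(V_{N-N_{B}},\mu_{N_{B}})$, uses the exact Bellman recursion $V_{N-N_{B}}(x_{N_{B}})=V_{N-N_{B}-1}(x_{N_{B}}^{+})+\ell(x_{N_{B}},\mu_{N_{B}}(x_{N_{B}}))$, and closes the horizon mismatch via hypothesis \eqref{eq:condition1}, exactly as in the paper's proof. The only cosmetic difference is that you work backward from the target inequality while the paper chains the estimates forward; the content is identical.
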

\begin{proof}
	By exploiting the definition of $V_{N\!-\!N_{B}}(x_{N_{B}})$ in \eqref{eq:reducedHorizon_ovf} and $\mu_{N_{B}}(x_{N_{B}})$ in \eqref{eq:DMB_lawLast} it follows that
	\begin{align*}
		V_{N\!-\!N_{B}}&(x_{N_{B}})=V_{N\!-\!N_{B}\!-\!1}(x_{N_{B}}^{+})+\ell(x_{N_{B}},\mu_{N_{B}}(x_{N_{B}}))\\
		&\geq V_{N\!-\!N_{B}}(x_{N_{B}}^{+})\!-\!(1\!-\!\alpha)\ell(x_{N_{B}},\mu_{N_{B}}(x_{N_{B}}))+\\\
		&\qquad \qquad \qquad \qquad \qquad +\!\ell(x_{N_{B}},\mu_{N_{B}}(x_{N_{B}}))\\
		& = V_{N\!-\!N_{B}}(x_{N_{B}}^{+})+\alpha)\ell(x_{N_{B}},\mu_{N_{B}}(x_{N_{B}})),
	\end{align*}
	where the second inequality holds thanks to \eqref{eq:condition1}. Based on the definition of $x_{N_{B}}^{+}$, from the previous inequality it follows that \eqref{eq:key_inequality} holds, directly yielding \eqref{eq:wanted1}.
\end{proof}
Let us now introduce the following assumptios.
\begin{assumption}[ \cite{Grune2008}]\label{assump:key_assumption}
	For $N \in \mathbb{N}$, there exists $\gamma>0$ such that
	\begin{align}
		&V_{2}(x)\leq (\gamma+1)V_{1}(x),\\
		&V_{k}(x)\leq (\gamma+1)\ell(x,\mu_{k}(x)), k=2,\ldots,N,
	\end{align}
	holds for all $x \in X$.
\end{assumption}
Then, the relationship between $V_{N\!-\!N_{B}}(x_{N_{B}})$ and $V_{N\!-\!N_{B}\!-\!1}(x_{N_{B}})$ can be characterized through $\gamma$ based on the following proposition.
\begin{proposition}[Shrinking of $V_{N\!-\!N_{B}}(x_{N_{B}})$]\label{proposition2}
	Let $N-N_{B}\geq 2$ and Assumption~\ref{assump:key_assumption} be satisfied for the MPC problem \eqref{eq:blocked_MPCsimplified} with horizon $N-N_{B}$. Then, the inequality
	\begin{equation}
	\eta(\gamma,N-N_{B})V_{N\!-\!N_{B}}(x_{N_{B}})\!\leq\!V_{N\!-\!N_{B}\!-\!1}(x_{N_{B}}),
	\end{equation}
	holds for all $x \in X$, where
	\begin{equation}
		\eta(\gamma,N-N_{B})=\left(\!\frac{(\gamma\!+\!1)^{N\!-\!N_{B}\!-\!2}}{(\gamma\!+\!1)^{N\!-\!N_{B}\!-\!2}+\gamma^{N\!-\!N_{B}}}\!\right).\vspace{.2cm}
	\end{equation}
\end{proposition}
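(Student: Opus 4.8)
The plan is to recast the claimed inequality as an upper bound on a one-step horizon increment, and then to control that increment with a geometric contraction extracted from Assumption~\ref{assump:key_assumption}. Abbreviate $M:=N-N_{B}\geq 2$; since the statement is asserted for all $x\in X$, I fix a generic admissible state, still written $x_{N_{B}}$ for consistency with the statement. Because the running cost is positive the value functions are monotone in the horizon, $V_{M-1}(x_{N_{B}})\leq V_{M}(x_{N_{B}})$, so $\eta V_{M}\leq V_{M-1}$ is equivalent (pure algebra, using $\eta\in(0,1]$ and $V_{M-1}\geq 0$) to
\[
V_{M}(x_{N_{B}})-V_{M-1}(x_{N_{B}})\leq \frac{\gamma^{M}}{(\gamma+1)^{M-2}}\,V_{M-1}(x_{N_{B}}),
\]
i.e. to bounding the nonnegative increment $\Delta_{M}:=V_{M}-V_{M-1}$ by a small fraction of $V_{M-1}$.

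First I would \emph{dominate the increment by a shorter-horizon increment one step downstream}. From the dynamic programming recursion $V_{m}(z)=\min_{u}\{\ell(z,u)+V_{m-1}(f(z,u))\}$ (cf.\ \eqref{eq:RH_law}), using the horizon-$(M-1)$ minimiser $\mu_{M-1}(z)$ as a \emph{feasible} first move for the horizon-$M$ problem gives $V_{M}(z)\leq \ell(z,\mu_{M-1}(z))+V_{M-1}(z^{+})$, while $V_{M-1}(z)=\ell(z,\mu_{M-1}(z))+V_{M-2}(z^{+})$ with $z^{+}=f(z,\mu_{M-1}(z))$. Subtracting yields $\Delta_{M}(z)\leq\Delta_{M-1}(z^{+})$. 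Iterating along the horizon-$(M-1)$ optimal trajectory $z_{0}=x_{N_{B}},z_{1},\dots,z_{M-2}$ collapses the increment to the shortest horizons, $\Delta_{M}(x_{N_{B}})\leq\Delta_{2}(z_{M-2})$, and the first inequality of Assumption~\ref{assump:key_assumption} ($V_{2}\leq(\gamma+1)V_{1}$) converts the terminal increment into a stage cost, $\Delta_{2}(z_{M-2})\leq\gamma\,V_{1}(z_{M-2})$.

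Next I would \emph{quantify how small $V_{1}(z_{M-2})$ is relative to $V_{M-1}(x_{N_{B}})$}. Writing the tail sums $w_{j}:=V_{M-1-j}(z_{j})$ of the horizon-$(M-1)$ optimal trajectory, the recursion gives $w_{j}=\ell(z_{j},\mu_{M-1-j}(z_{j}))+w_{j+1}$, while the second inequality of Assumption~\ref{assump:key_assumption} bounds $w_{j}\leq(\gamma+1)\ell(z_{j},\mu_{M-1-j}(z_{j}))$ whenever the residual horizon $M-1-j\geq 2$. Eliminating the stage cost between these two relations gives the geometric contraction $w_{j+1}\leq\frac{\gamma}{\gamma+1}\,w_{j}$, hence $V_{1}(z_{M-2})=w_{M-2}\leq\big(\tfrac{\gamma}{\gamma+1}\big)^{M-2}w_{0}=\big(\tfrac{\gamma}{\gamma+1}\big)^{M-2}V_{M-1}(x_{N_{B}})$. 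Combining with the increment bound of the previous step produces $\Delta_{M}(x_{N_{B}})\leq \gamma\big(\tfrac{\gamma}{\gamma+1}\big)^{M-2}V_{M-1}(x_{N_{B}})$, a bound of exactly the structure $\gamma^{\bullet}(\gamma+1)^{-(M-2)}V_{M-1}$ needed to rearrange into $\eta V_{M}\leq V_{M-1}$ with $\eta$ as in the statement.

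The hard part will be the \emph{terminal bookkeeping that pins down the exact power of $\gamma$ in the numerator of $\eta(\gamma,N-N_{B})$}. The contraction $w_{j+1}\leq\frac{\gamma}{\gamma+1}w_{j}$ is only licensed while the residual horizon is at least two, so the steps at sub-horizons $1$ and $2$ must be treated separately, and it is precisely the careful accounting of these last stages (together with the conversion factor from $V_{2}\leq(\gamma+1)V_{1}$) that fixes whether the numerator reads $\gamma^{M-1}$ or the stated $\gamma^{M}$, as well as forcing the hypothesis $N-N_{B}\geq 2$. I would therefore verify that every invocation of Assumption~\ref{assump:key_assumption} occurs at an admissible horizon $k\in\{2,\dots,N\}$, that all intermediate value functions $V_{1},\dots,V_{M}$ along the chain are well defined, and that the geometric sum and the telescoping remain valid for every reachable $x_{N_{B}}$, so that the final constant matches $\eta$ for all $x\in X$.
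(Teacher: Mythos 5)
Your two structural steps --- the increment domination $\Delta_{M}(z)\leq\Delta_{M-1}(z^{+})$ iterated down to horizon two, and the geometric contraction $w_{j+1}\leq\frac{\gamma}{\gamma+1}w_{j}$ of the tail values along the horizon-$(M-1)$ optimal trajectory --- are both correct, and together they are precisely the Gr\"une--Rantzer argument that the paper relies on (the paper itself omits the proof, deferring to \cite[Proposition 4.4]{Grune2008} with $N$ replaced by $N-N_{B}$). The genuine problem is the point you postpone as ``terminal bookkeeping'': your chain gives
\begin{equation*}
V_{M}(x_{N_{B}})-V_{M-1}(x_{N_{B}})\leq\gamma\left(\frac{\gamma}{\gamma+1}\right)^{M-2}V_{M-1}(x_{N_{B}})=\frac{\gamma^{M-1}}{(\gamma+1)^{M-2}}\,V_{M-1}(x_{N_{B}}),
\end{equation*}
i.e.\ $\eta$ with numerator exponent $M-1$, and no refinement of the last two stages can raise this to the stated $\gamma^{M}$, because for $\gamma<1$ the stated inequality does not follow from Assumption~\ref{assump:key_assumption} at all. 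Counterexample: $f(x,u)=ax$ with $0<a<1$, $\ell(x,u)=x^{2}$, and $M=N-N_{B}=2$. Then $V_{1}(x)=x^{2}$, $V_{2}(x)=(1+a^{2})x^{2}$, so Assumption~\ref{assump:key_assumption} holds with equality for $\gamma=a^{2}$; but $\eta(\gamma,2)V_{2}(x)\leq V_{1}(x)$ would require $1+a^{2}\leq 1+a^{4}$, which is false. So your proof is sound, but it proves the (correct, essentially tight) $\gamma^{M-1}$ variant rather than the proposition as printed, and the remaining gap cannot be closed.

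The ``missing'' factor of $\gamma$ in fact belongs to a different inequality: the successor-state bound \eqref{eq:condition1} with $1-\alpha=\gamma^{M}/(\gamma+1)^{M-2}$, which is what the proof of Theorem~\ref{thm1} actually invokes. That form does carry $\gamma^{M}$, and your result implies it: apply your same-state bound at $x_{N_{B}}^{+}=f(x_{N_{B}},\mu_{N_{B}}(x_{N_{B}}))$ and then use the second part of Assumption~\ref{assump:key_assumption} once more at horizon $M$, which gives $V_{M-1}(x_{N_{B}}^{+})=V_{M}(x_{N_{B}})-\ell(x_{N_{B}},\mu_{N_{B}}(x_{N_{B}}))\leq\gamma\,\ell(x_{N_{B}},\mu_{N_{B}}(x_{N_{B}}))$, hence $\Delta V_{M}(x_{N_{B}}^{+})\leq\frac{\gamma^{M}}{(\gamma+1)^{M-2}}\ell(x_{N_{B}},\mu_{N_{B}}(x_{N_{B}}))$. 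In short: the exponent $\gamma^{M}$ is right for the increment-at-successor-state-versus-stage-cost inequality that relaxed dynamic programming needs, whereas the same-state comparison against $V_{M-1}$ stated in Proposition~\ref{proposition2} supports only $\gamma^{M-1}$. Your argument, completed by this one extra application of the assumption, yields everything the paper uses downstream; as a proof of the literal statement it cannot be completed, because the literal statement is over-claimed by one factor of $\gamma$.
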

	The proof of this proposition straightforwardly follows from that of \cite[Proposition 4.4]{Grune2008}, by replacing the full horizon $N$ with the difference $N-N_{B}$ and it is thus omitted.
\begin{remark}[Constraining $N_{B}$]
Through Proposition~\ref{proposition2}, we obtain a first upper bound on $N_{B}$. Indeed, the blocking horizon $N_{B}$ must satisfy
\begin{equation}\label{eq:first_bound}
	N_{OP} \leq N_{B} \leq N-2,
\end{equation}
for the result in Proposition~\ref{proposition2} to hold. In turn, this enforces $N$ to be at least equal to $N_{OP}+2$.
\end{remark}
\begin{remark}[A note about Assumption~\ref{assump:key_assumption}]
	The satisfaction of Assumption~\ref{assump:key_assumption} can be checked by analyzing the running cost $\ell(\cdot)$. In particular, as detailed in \cite[Proposition 4.7]{Grune2008}, the loss should satisfy
	\begin{equation*}
		\ell(x,u)\geq \alpha W(x),
	\end{equation*}
	with $\alpha >0$ and $W: X \rightarrow \mathbb{R}_{0}^{+}$, while also verifying an exponential controllability condition.
\end{remark}
By leveraging this result, we can now characterize the suboptimality of $\mu_{N_{B}}(x_{N_{B}})$ as follows.
\begin{theorem}[Asymptotic stability]\label{thm1}
	Let $N,N_{B} \in \mathbb{N}$ be the prediction and blocking horizon of the MPC-DMB problem in \eqref{eq:blocked_MPC}, respectively, and let $N_{B}$ satisfy \eqref{eq:first_bound}. Let $\gamma>0$, assume that
	\begin{equation}\label{eq:assumption}
	(\gamma+1)^{N-N_{B}-2}>\gamma^{N-N_{B}},
	\end{equation}
	and that Assumption~\ref{assump:key_assumption} holds for these $\gamma$ and for the reduced horizon $N-N_{B}$. Then, for all $x \in X$, it holds that
	\begin{equation}\label{eq:performance_bound1}
		V_{\infty}^{\mu_{N_{B}}}(x_{N_{B}}) \leq \frac{(\gamma+1)^{N\!-\!N_{B}\!-\!2}}{(\gamma+1)^{N\!-\!N_{B}\!-\!2}\!-\!\gamma^{N-N_{B}}} V_{\infty}(x_{N_{B}}),
	\end{equation}
	and, as a consequence,
	\begin{equation}\label{eq:performance_bound2}
		\frac{V_{\infty}^{\mu_{N_{B}}}(x_{N_{B}})\!-\!V_{\infty}(x_{N_{B}})}{ V_{\infty}(x_{N_{B}})} \!\leq\! \frac{\gamma^{N\!-\!N_{B}}}{(\gamma+1)^{N\!-\!N_{B}\!-\!2}\!-\!\gamma^{N\!-\!N_{B}}}.\vspace{.1cm}
	\end{equation}
\end{theorem}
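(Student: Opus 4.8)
The plan is to recognize \eqref{eq:performance_bound1} as an instance of the relaxed dynamic-programming estimate of \cite{Grune2008}, applied to the reduced-horizon problem \eqref{eq:blocked_MPCsimplified} of effective length $N-N_{B}$, and then to identify the relaxation parameter $\alpha$ with the explicit function of $\gamma$ in the statement. Concretely, I would set
\[
\alpha=\frac{(\gamma+1)^{N-N_{B}-2}-\gamma^{N-N_{B}}}{(\gamma+1)^{N-N_{B}-2}},
\]
observe that \eqref{eq:assumption} is exactly the condition guaranteeing that the numerator is positive, hence $\alpha\in(0,1]$, and aim to show that $V_{N-N_{B}}$ together with the feedback $\mu_{N_{B}}$ of \eqref{eq:DMB_lawLast} satisfies the relaxed inequality \eqref{eq:key_inequality} with this $\alpha$. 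It is worth noting that $\alpha$ can be written compactly as $\alpha=2-\eta(\gamma,N-N_{B})^{-1}$, so that the whole task reduces to converting the shrinking estimate of Proposition~\ref{proposition2} into a guaranteed one-step decrease of $V_{N-N_{B}}$ along the closed loop.

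The core of the argument is to verify hypothesis \eqref{eq:condition1} of Lemma~\ref{eq:key_lemma} for this $\alpha$. Using Bellman's recursion for the reduced problem, $V_{N-N_{B}}(x_{N_{B}})=\ell(x_{N_{B}},\mu_{N_{B}}(x_{N_{B}}))+V_{N-N_{B}-1}(x_{N_{B}}^{+})$, the claimed decrease is algebraically equivalent to bounding the one-horizon gap $V_{N-N_{B}}(x_{N_{B}}^{+})-V_{N-N_{B}-1}(x_{N_{B}}^{+})$ by $(1-\alpha)\ell(x_{N_{B}},\mu_{N_{B}}(x_{N_{B}}))$. I would produce this bound by applying Proposition~\ref{proposition2} at the successor state $x_{N_{B}}^{+}$, which yields $V_{N-N_{B}}(x_{N_{B}}^{+})\le \eta(\gamma,N-N_{B})^{-1}V_{N-N_{B}-1}(x_{N_{B}}^{+})$, and then eliminating $V_{N-N_{B}-1}(x_{N_{B}}^{+})$ through the Bellman identity and the overshoot bound of Assumption~\ref{assump:key_assumption}. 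Once \eqref{eq:condition1} is established, Lemma~\ref{eq:key_lemma} immediately delivers \eqref{eq:wanted1}, i.e. $\alpha V_{\infty}^{\mu_{N_{B}}}(x_{N_{B}})\le V_{N-N_{B}}(x_{N_{B}})$.

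With \eqref{eq:wanted1} available, the first estimate \eqref{eq:performance_bound1} follows from two observations. Since the running cost is nonnegative, the value function is monotone in the horizon, giving the truncation bound $V_{N-N_{B}}(x_{N_{B}})\le V_{\infty}(x_{N_{B}})$. Dividing \eqref{eq:wanted1} by $\alpha>0$ and substituting the explicit $\alpha$ then gives $V_{\infty}^{\mu_{N_{B}}}(x_{N_{B}})\le\alpha^{-1}V_{N-N_{B}}(x_{N_{B}})\le\alpha^{-1}V_{\infty}(x_{N_{B}})$, which is precisely \eqref{eq:performance_bound1} with $\alpha^{-1}=(\gamma+1)^{N-N_{B}-2}/((\gamma+1)^{N-N_{B}-2}-\gamma^{N-N_{B}})$. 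The relative bound \eqref{eq:performance_bound2} is then pure arithmetic: subtracting $V_{\infty}(x_{N_{B}})$ and dividing by it turns the multiplicative factor $\alpha^{-1}$ into the stated ratio $\gamma^{N-N_{B}}/((\gamma+1)^{N-N_{B}-2}-\gamma^{N-N_{B}})$. I would also remark that $\alpha>0$ makes $V_{N-N_{B}}$ a Lyapunov function along the closed loop, which is the asymptotic-stability content carried by the theorem's name.

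The step I expect to be the main obstacle is the verification of \eqref{eq:condition1} with the \emph{exact} constant $1-\alpha=\gamma^{N-N_{B}}/(\gamma+1)^{N-N_{B}-2}$: matching the shrinking factor $\eta$ of Proposition~\ref{proposition2} to the relaxation parameter $\alpha$ requires chaining the inequality at $x_{N_{B}}^{+}$ with Bellman's identity and Assumption~\ref{assump:key_assumption} without incurring a spurious extra factor of $\gamma$, and it is precisely here that the explicit algebraic form of $\eta(\gamma,N-N_{B})$ and the standing requirement $N-N_{B}\ge 2$ from \eqref{eq:first_bound} are used in full. Everything downstream of \eqref{eq:wanted1} is then a routine rearrangement of the value-function inequalities.
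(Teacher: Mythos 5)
Your skeleton coincides with the paper's own proof: identify $\alpha=1-\gamma^{N-N_{B}}/(\gamma+1)^{N-N_{B}-2}$ (positive exactly under \eqref{eq:assumption}), verify the relaxed dynamic-programming condition \eqref{eq:condition1}, apply Lemma~\ref{eq:key_lemma} to obtain \eqref{eq:wanted1}, then use the horizon-monotonicity bound $V_{N-N_{B}}(x_{N_{B}})\leq V_{\infty}(x_{N_{B}})$ and divide by $\alpha$; the final arithmetic giving \eqref{eq:performance_bound2} is also exactly as in the paper. All of these downstream steps are correct.

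The gap sits precisely at the step you yourself flag as the main obstacle, and the route you propose cannot close it. Applying Proposition~\ref{proposition2} at the successor state and then eliminating $V_{N-N_{B}-1}(x_{N_{B}}^{+})$ via Bellman and Assumption~\ref{assump:key_assumption} gives
\begin{align*}
V_{N-N_{B}}(x_{N_{B}}^{+})-V_{N-N_{B}-1}(x_{N_{B}}^{+})&\leq \frac{\gamma^{N-N_{B}}}{(\gamma+1)^{N-N_{B}-2}}\,V_{N-N_{B}-1}(x_{N_{B}}^{+})\\
&\leq \frac{\gamma^{N-N_{B}+1}}{(\gamma+1)^{N-N_{B}-2}}\,\ell(x_{N_{B}},\mu_{N_{B}}(x_{N_{B}})),
\end{align*}
since the best Assumption~\ref{assump:key_assumption} and Bellman's identity give is $V_{N-N_{B}-1}(x_{N_{B}}^{+})=V_{N-N_{B}}(x_{N_{B}})-\ell(x_{N_{B}},\mu_{N_{B}}(x_{N_{B}}))\leq \gamma\,\ell(x_{N_{B}},\mu_{N_{B}}(x_{N_{B}}))$. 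This is $(1-\alpha)$ times a spurious factor $\gamma$: it proves the theorem only with $\gamma^{N-N_{B}}$ replaced everywhere by $\gamma^{N-N_{B}+1}$, i.e., under the stronger hypothesis $(\gamma+1)^{N-N_{B}-2}>\gamma^{N-N_{B}+1}$, which is not the stated result (for $\gamma\geq 1$ the constants genuinely differ). The obstruction is structural: the state-wise shrinking estimate of Proposition~\ref{proposition2} carries the exponent $\gamma^{N-N_{B}}$, whereas the induction in \cite{Grune2008} actually yields the sharper state-wise bound $V_{m}(y)-V_{m-1}(y)\leq \gamma^{m-1}(\gamma+1)^{-(m-2)}V_{m-1}(y)$, and only that sharper form, chained with Bellman and the overshoot bound, lands exactly on $(1-\alpha)\ell$. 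The paper never attempts your chaining: its proof certifies \eqref{eq:condition1} by invoking the successor-state estimate of \cite[Proposition 4.4]{Grune2008} (with horizon $N-N_{B}$) directly, via a footnote. To complete your argument you must either cite that successor-state estimate as the paper does, or reprove it by the induction of \cite{Grune2008}; the route through Proposition~\ref{proposition2} alone is a dead end.
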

\begin{proof}
	The proof follows the same steps of \cite[Proof of Theorem 4.5]{Grune2008}. In particular, from proposition~\ref{proposition2}, it follows that
	\begin{align*}
		\Delta V_{N\!-\!N_{B}}(x_{N_{B}})&\leq (\eta^{\!-1}(\gamma,n-n_{b})-1)V_{N\!-\!N_{B}\!-\!1}(x_{N_{B}})\\
		& = \frac{\gamma^{N-N_{B}}}{(\gamma+1)^{N-N_{B}-2}}V_{N\!-\!N_{B}\!-\!1}(x_{N_{B}}).
	\end{align*}
	where $\Delta V_{N\!-\!N_{B}}(x_{N_{B}})=V_{N\!-\!N_{B}}(x_{N_{B}})-V_{N\!-\!N_{B}\!-\!1}(x_{N_{B}})$. Since Assumption~\ref{assump:key_assumption} implies that\footnote{The reader is referred to \cite{Grune2008} for a detail proof of this inequality.}
	\begin{align*}
		&\Delta V_{N\!-\!N_{B}}(f(x_{N_{B}},\mu_{N_{B}}(x_{N_{B}})) \leq \\
		&\qquad \qquad \qquad \qquad\frac{\gamma^{N-N_{B}}}{(\gamma\!+\!1)^{N\!-\!N_{B}\!-\!2}} \ell(x_{N_{B}},\mu_{N_{B}}(x_{N_{B}})),
	\end{align*}
	in turn, yielding the inequality in \eqref{eq:condition1}. Therefore, Lemma~\ref{lemma3} can be applied by setting
	\begin{equation}
	\alpha=1-\frac{\gamma^{N\!-\!N_{B}}}{(\gamma+1)^{N\!-\!N_{B}\!-\!2}}=\frac{(\gamma+1)^{N\!-\!N_{B}\!-\!2}-\gamma^{N\!-\!N_{B}}}{(\gamma+1)^{N\!-\!N_{B}\!-\!2}}.
	\end{equation}
	Hence, \eqref{eq:performance_bound1} results directly from \eqref{eq:key_ineq2} that, in turn, yields the bound on the relative difference between infinite value functions in \eqref{eq:performance_bound2}, thus concluding the proof.
\end{proof}
Thanks to our assumptions about the features of the running cost and the state space $X$, Theorem~\ref{thm1} implies \textit{asymptotic stability} of the compact set $A$ with respect to which $\ell(\cdot)$ is positive definite and proper, starting from the initial state $x_{N_{B}}$ and solving the reduced-horizon problem in \eqref{eq:blocked_MPCsimplified} (as discussed in \cite{Grune2008}). As a direct consequence of this Theorem~\ref{thm1}, we can further formalize our bound \eqref{eq:first_bound} on the blocking horizon as follows.
\begin{lemma}[Further constraining $N_B$]
	Given $N,N_{B} \in \mathbb{N}$, let Assumption~\ref{assump:key_assumption} be satisfied by $V_{N\!-\!N_{B}}(x_{N_{B}})$. Then, for the performance bounds \eqref{eq:performance_bound1}-\eqref{eq:performance_bound2} to hold, the blocking horizon should be upper-bounded as follows:
	\begin{equation}\label{eq:upper_bound}
		N_{B} < N-2\frac{\log(\gamma+1)}{\log(\gamma+1)-\log(\gamma)}. \vspace{.2cm}
	\end{equation}
\end{lemma}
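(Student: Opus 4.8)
The plan is to recognise that the performance estimates \eqref{eq:performance_bound1}--\eqref{eq:performance_bound2} are only well posed — that is, positive and finite — precisely when the common denominator $(\gamma+1)^{N-N_{B}-2}-\gamma^{N-N_{B}}$ is strictly positive. This is exactly the standing hypothesis \eqref{eq:assumption} of Theorem~\ref{thm1}, under which the suboptimality factor $\alpha$ built in its proof lies in $(0,1]$. Hence the whole content of the lemma is to translate the inequality \eqref{eq:assumption}, namely $(\gamma+1)^{N-N_{B}-2}>\gamma^{N-N_{B}}$, into an explicit admissible range for the blocking horizon $N_{B}$. I regard identifying this positivity condition as the binding constraint to be the only conceptual step; everything afterwards is routine.

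To carry this out I would take logarithms of both sides. Both quantities are strictly positive because $\gamma>0$ (so $\gamma+1>1$ and $\gamma^{N-N_{B}}>0$), and since $\log$ is strictly increasing the inequality direction is preserved, giving
\begin{equation*}
	(N-N_{B}-2)\log(\gamma+1)>(N-N_{B})\log(\gamma).
\end{equation*}
Collecting the terms proportional to $N-N_{B}$ on one side then yields $(N-N_{B})\bigl(\log(\gamma+1)-\log(\gamma)\bigr)>2\log(\gamma+1)$.

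The only point requiring care — and what I would flag as the main, albeit minor, obstacle — is the sign of the factor $\log(\gamma+1)-\log(\gamma)$, since dividing by it must not flip the inequality. Because $\gamma>0$ forces $\gamma+1>\gamma$ and $\log$ is strictly increasing, this factor is strictly positive, so the division preserves the direction and gives $N-N_{B}>2\log(\gamma+1)/\bigl(\log(\gamma+1)-\log(\gamma)\bigr)$. Rearranging for $N_{B}$ reproduces exactly \eqref{eq:upper_bound}. A closing remark could then note that this upper bound, combined with the lower bound $N_{OP}\leq N_{B}$ already imposed in \eqref{eq:first_bound}, delimits the feasible blocking horizons, and that the admissible set is nonempty only when $N$ is large enough for the right-hand side of \eqref{eq:upper_bound} to admit an integer in $[N_{OP},\,N-2]$.
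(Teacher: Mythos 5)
Your proposal is correct and follows exactly the route the paper intends: its own (one-line) proof simply says the bound is obtained by manipulating condition \eqref{eq:assumption} for the reduced horizon $N-N_{B}$, which is precisely the logarithmic rearrangement you carry out, including the necessary observation that $\log(\gamma+1)-\log(\gamma)>0$ so the division preserves the inequality. Your version is in fact more explicit than the paper's, but there is no methodological difference.
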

\begin{proof}
	The proof is easily obtained by simply manipulating the assumption on $\delta$ and the reduced horizon $N-N_{B}$ required by Theorem~\ref{thm1} (see \eqref{eq:assumption}).
\end{proof}

By relying on these results for the reduced horizon problem \eqref{eq:blocked_MPCsimplified}, we can now state the following sub-optimality bound for the MPC-DMB scheme.
\begin{theorem}[Suboptimality bound for MPC-DMB]\label{thm2}
	Let $N,N_{B} \in \mathbb{N}$ be the prediction and blocking horizon of the MPC-DMB problem in \eqref{eq:blocked_MPC}, respectively, and let $N_{B}$ satisfy \eqref{eq:upper_bound}. Suppose that there exists a $\gamma>0$ such that \eqref{eq:assumption} and Assumption~\ref{assump:key_assumption} hold, with the latter being verified for the reduced horizon $N-N_{B}$. Then, for all $x \in X$, it holds that
	\begin{equation}\label{eq:performance_boundOverall}
		\frac{V_{\infty}^{\mu_{N}^{\mathrm{DMB}}}\!\!\!\!(x)\!-\!V_{\infty}(x)}{ V_{\infty}(x)} \!\leq\! \frac{\gamma^{N\!-\!N_{B}}}{(\gamma+1)^{N\!-\!N_{B}\!-\!2}\!-\!\gamma^{N\!-\!N_{B}}}\!+\!\!\frac{\delta(x)}{V_{\infty}(x)}.\vspace{.1cm}
	\end{equation}
\end{theorem}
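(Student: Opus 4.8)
The plan is to reduce the claim to Theorem~\ref{thm1} by means of the additive decomposition of the DMB closed-loop cost, and then to carry the residual entirely into the $\delta(x)$ term. First I would start from the definition in \eqref{eq:infty_dmb}, which splits the infinite-horizon cost incurred by the DMB controller into the pre-computable cost $\delta(x)$ of the $N_B$ blocked moves and the closed-loop cost $V_\infty^{\mu_{N_B}}(x_{N_B})$ accrued from the predicted state $x_{N_B}$ onward,
\begin{equation*}
	V_\infty^{\mu_N^{\mathrm{DMB}}}(x)=\delta(x)+V_\infty^{\mu_{N_B}}(x_{N_B}).
\end{equation*}
Since $N_B$ satisfies \eqref{eq:upper_bound} and the pair \eqref{eq:assumption}--Assumption~\ref{assump:key_assumption} holds for the reduced horizon $N-N_B$, the hypotheses of Theorem~\ref{thm1} are exactly in force, so I can apply the bound \eqref{eq:performance_bound1} to the second summand alone.

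Second, substituting \eqref{eq:performance_bound1} yields
\begin{equation*}
	V_\infty^{\mu_N^{\mathrm{DMB}}}(x)\leq \delta(x)+\frac{(\gamma+1)^{N-N_B-2}}{(\gamma+1)^{N-N_B-2}-\gamma^{N-N_B}}\,V_\infty(x_{N_B}).
\end{equation*}
Writing the leading factor as $1+\beta$, with $\beta=\gamma^{N-N_B}/\big((\gamma+1)^{N-N_B-2}-\gamma^{N-N_B}\big)$, subtracting $V_\infty(x)$ and dividing by $V_\infty(x)$, the target estimate \eqref{eq:performance_boundOverall} follows as soon as $V_\infty(x_{N_B})$ can be replaced by $V_\infty(x)$: the algebra then collapses the term $(1+\beta)V_\infty(x_{N_B})-V_\infty(x)$ into $\beta\,V_\infty(x)$, leaving precisely $\beta+\delta(x)/V_\infty(x)$ on the right-hand side.

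The hard part will therefore be establishing $V_\infty(x_{N_B})\leq V_\infty(x)$, i.e. that the optimal infinite-horizon cost-to-go does not grow across the $N_B$ blocked steps. The subtlety is that $x_{N_B}$ lies on the \emph{blocked} (hence suboptimal) trajectory rather than on an optimal one, so the Bellman decrease of $V_\infty$ along its own optimal trajectory cannot be invoked directly, and the suboptimality inequality $V_\infty(x)\leq\delta(x)+V_\infty(x_{N_B})$ implicit in Lemma~\ref{lemma:ranking} only bounds $V_\infty(x_{N_B})$ from below. My approach would be to exploit that the blocked inputs $\tilde u(N-N_B),\dots,\tilde u(N-1)$ are the tail of the stabilizing sequence produced at the previous optimization: invoking the asymptotic stability of $A$ granted by Theorem~\ref{thm1} together with the positive definiteness and properness of $\ell$, one argues that $V_\infty$ serves as a Lyapunov-type certificate that is non-increasing along this closed-loop segment, and iterating the resulting one-step decrease over the $N_B$ blocked moves gives $V_\infty(x_{N_B})\leq V_\infty(x)$. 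Should this monotonicity not hold in full generality, the fallback is to restrict attention to initial states in a sublevel set of $V_\infty$ that the blocked moves render forward invariant, which is enough for the estimate. Once $V_\infty(x_{N_B})\leq V_\infty(x)$ is secured, substituting it into the displayed inequality and dividing through by $V_\infty(x)$ reproduces \eqref{eq:performance_boundOverall} and closes the proof.
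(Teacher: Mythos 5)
Your skeleton coincides with the paper's proof: decompose $V_\infty^{\mu_N^{\mathrm{DMB}}}(x)=\delta(x)+V_\infty^{\mu_{N_B}}(x_{N_B})$ via \eqref{eq:infty_dmb}, apply the bound \eqref{eq:performance_bound1} of Theorem~\ref{thm1} to the second summand, replace $V_\infty(x_{N_B})$ by $V_\infty(x)$, and rearrange. Your final algebra is correct (indeed cleaner than the paper's own intermediate display, which drops a factor of $V_\infty(x)$). The genuine gap is exactly the step you yourself flag as ``the hard part'': you never establish $V_\infty(x_{N_B})\leq V_\infty(x)$, and the argument you sketch for it would fail.

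The Lyapunov-monotonicity claim is unsound. The value function $V_\infty$ is guaranteed to be non-increasing only along trajectories of the \emph{infinite-horizon optimal} feedback; along the blocked inputs $\tilde u(N-N_B),\dots,\tilde u(N-1)$, which are tails of a finite-horizon solution computed at the previous optimization step and hence not optimal for $J_\infty$, Bellman's principle gives only $V_\infty(x)\leq \ell(x,\tilde u(N-N_B))+V_\infty\bigl(f(x,\tilde u(N-N_B))\bigr)$, i.e., a \emph{lower} bound on the successor value --- the same wrong direction you correctly observe is all that Lemma~\ref{lemma:ranking} supplies. The asymptotic stability delivered by Theorem~\ref{thm1} concerns the closed loop from $x_{N_B}$ onward and says nothing about whether $V_\infty$ grows during the blocked segment, so it cannot certify the decrease either. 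The sublevel-set fallback does not repair this: requiring \emph{every} sublevel set of $V_\infty$ to be forward invariant under the blocked moves is equivalent to the monotonicity you are trying to prove, while invariance of a single sublevel set both remains unproven and would weaken the claim, which is stated for all $x\in X$. The paper closes this step by a different, elementary route: using $\ell\geq 0$, it compares the infimum of the tail cost $\sum_{k=N_B}^{\infty}\ell(x_u(k),u(k))$ with the infimum of the full cost $\sum_{k=0}^{\infty}\ell(x_u(k),u(k))$ over trajectories with $x_u(0)=x$ and $x_u(N_B)=x_{N_B}$, discarding the nonnegative initial terms to conclude $V_\infty(x_{N_B})\leq V_\infty(x)$. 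Your instinct that this inequality is delicate is well founded --- those infima range over differently constrained sets of control sequences, so even the paper's justification is loose --- but a complete proof must supply a valid argument at this point, and yours supplies one that demonstrably fails.
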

\begin{proof}
	Thanks to our assumptions, the bound in \eqref{eq:performance_bound1} holds. Therefore,
	\begin{equation*}
		V_{\infty}^{\mu_{N_{B}}}(x_{N_{B}})+\delta(x)-\delta(x) \leq  \upsilon(\gamma,N-N_{B}) V_{\infty}(x_{N_{B}}),
	\end{equation*}
	with $\upsilon(\gamma,N-N_{B})=\frac{(\gamma+1)^{N\!-\!N_{B}\!-\!2}}{(\gamma+1)^{N\!-\!N_{B}\!-\!2}\!-\!\gamma^{N-N_{B}}}$ which further implies that
	\begin{equation}\label{eq:bound_1}
			V_{\infty}^{\mu_{N}^{\mathrm{DMB}}}(x)-\delta(x) \leq  \upsilon(\gamma,N-N_{B}) V_{\infty}(x_{N_{B}}),
	\end{equation}
	according to \eqref{eq:infty_dmb}. Since $\ell(x,u) \in \mathbb{R}^{+}_{0}$ for all $(x,u) \in X \times U$ and $x_{N_{B}}$ is obtained as the $N_{B}$ steps ahead prediction of the state of \eqref{eq:S} starting from $x$, the following further holds
	\begin{align*}
		V_{\infty}(x_{N_{B}})&=\inf_{u \in \mathcal{U}} \sum_{k=N_{B}}^{\infty}\ell(x_{u}(k),u(k))\\
		& \leq \inf_{u \in \mathcal{U}} \sum_{k=0}^{\infty}\ell(x_{u}(k),u(k))=V_{\infty}(x),
	\end{align*}
	where $x_{u}(N_{B})=X_{N_{B}}$ while $x_{u}(0)=x$. Hence, from \eqref{eq:bound_1} we obtain that
		\begin{equation}\label{eq:bound_2}
		V_{\infty}^{\mu_{N}^{\mathrm{DMB}}}(x)-\delta(x) \leq  \upsilon(\gamma,N-N_{B}) V_{\infty}(x).
	\end{equation}
	Straightforward manipulations of this inequality, leads to
	\begin{equation}
		V_{\infty}^{\mu_{N}^{\mathrm{DMB}}}(x)-V_{\infty}(x) \leq  \upsilon(\gamma,N-N_{B})+\delta(x),
	\end{equation}
	that, by dividing both terms for $V_{\infty}(x)$ leads to \eqref{eq:performance_boundOverall} and, thus, concludes the proof.
\end{proof}
This Theorem~\ref{thm2} formalizes a rather intuitive result. Indeed, \eqref{eq:performance_boundOverall} implies that the level of suboptimality of the controller resulting from the MPC-DMB scheme is linked to both the properties of control action at time $t+N_{b}$ (that is optimized) and the value of the cumulative loss associated with those steps where the input is blocked.

\section{Numerical example}
\label{sec:POC-example}

The scope of this brief numerical section is to show that, at least in a simple simulation case study, the DMB approach is feasible way to implement MPC under computational constraints, without leading to detrimental performance. Let the system to be controlled be the linear time-invariant plant
\begin{equation}
\left\{
\begin{array}{rcl}
x_1(t+1) &=& 0.9 \cdot x_1(t) + 0.1 \cdot u(t)\\
x_2(t+1) &=& 0.6 \cdot x_1(t) + 0.4 \cdot x_2(t)\\
y(t)     &=& x_2(t),
\end{array}
\right.
\end{equation}
while the control problem to solve is as indicated in \eqref{eq:infinite_cost}, where the quadratic cost
\[\ell(x_u(t),u(t))=x(t)^TQx(t)+u(t)^TRu(t),\]
is employed, $R = 1$, and 
$$Q = \begin{bmatrix}10 & 0\\0 & 100\end{bmatrix}.$$
Let the control variable $u$ be constrained so that it must lie within the [-1,1] range. No constraints on the state $x$ are instead imposed, namely $X \equiv \mathbb{R}^2$. We consider the achieved closed-loop properties in terms of \textit{regulation to zero}, starting from an initial condition with $x_1=1$ and $x_2=-1$. The prediction horizon $N$ is set equal to $6$ steps.

Let us assume that the computation time needed for the solution of the control problem amounts to three steps, thus $N_B=N_{OP}=3$. For a fair comparison, we assess DMB MPC as compared to a traditional MPC control with reduced horizon $N-N_B=3$.

The time histories of the state trajectories are illustrated in Figure~\ref{fig:ex1}, where it can be clearly seen that the error between the (unfeasible) receding-horizon solution and that of the (feasible) DMB-MPC approach remains limited, even if MPC is obviously faster at the end of the transient, due to a more rapid update of the control action. This is also confirmed by the small difference between the Root Mean Square Errors (RMSE) reported in Table \ref{tab:rmse} for the two strategies.

\begin{figure}
	\includegraphics[width=1\columnwidth]{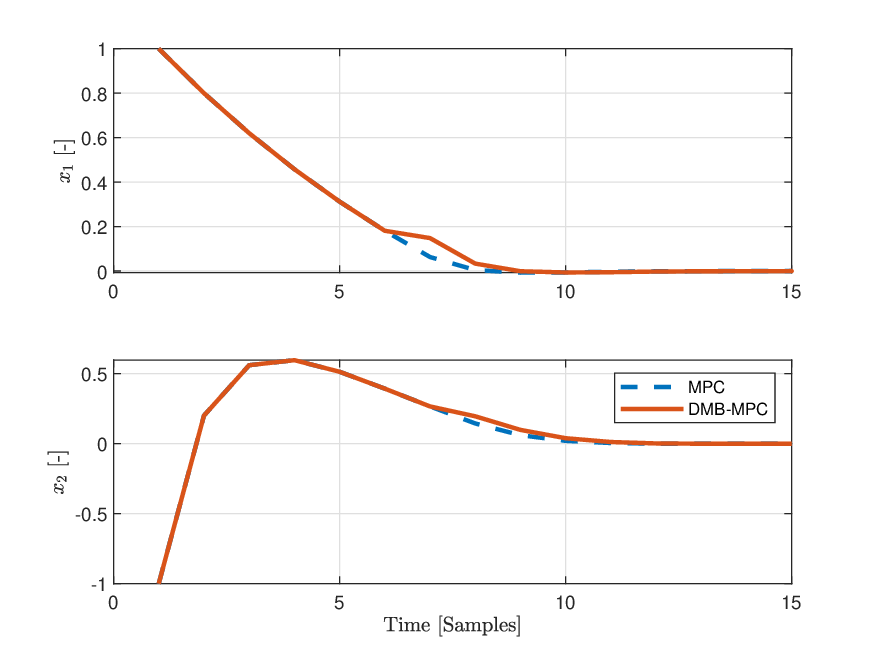}
	\caption{State trajectories in the example of Section \ref{sec:POC-example} for a traditional MPC strategy and the proposed DMB rationale.}
	\label{fig:ex1}
\end{figure}

\begin{table}
\begin{center}
	\begin{tabular}{|c | c|}
\hline
		\textbf{MPC} & 15.9134  \\
		\hline 
	\textbf{DMB MPC} & 16.1334\\
		\hline 
	\end{tabular}
\end{center}\caption{RMSE for MPC and DMB MPC.}\label{tab:rmse}
\end{table}

\section{Conclusions and future work}
\label{sec:Conclusions}

In this paper, we introduced a \textit{dynamic move blocking} MPC approach to comply with the case in which the optimisation cannot be solved in one sampling period, without changing the formulation of the problem at hand. The issue occurs whenever computational resource limitations are relevant, but the sampling frequency cannot be decreased due to physical constraints.

The proposed strategy is simple yet effective, as it exploits the optimal solution already available (the one computed at the previous round) while the new optimization is ongoing. The resulting problem is shown to be equivalent to a reduced horizon MPC. Asymptotic stability as well as error bounds have been proven under some mild assumptions.

Future work will be directed toward an extensive experimental assessment of the proposed strategy, as well as a comparison with the other empirical alternatives to deal with the problem of limited resources.

\bibliography{2023-CDC-DMB-MPC}

\end{document}